\theoremstyle{definition}
\newtheorem{thm}{Theorem} 
\newtheorem{lem}{Lemma}
\newtheorem{cor}{Corollary}
\newcommand{\widebar}[1]{\mbox{\kern1.5pt\hbox{\vbox{\hrule height 0.6pt
\kern0.35ex
        \hbox{\kern-0.15em \ensuremath{#1 }\kern0.0em}}}}\kern-0.1pt}
\newcommand{\disp}{{\rm disp}}
\newcommand {\ve} {\varepsilon}
\newcommand{\abs}[1]{\left\vert #1 \right\vert}
\newlength{\fixboxwidth}
\title[On the size of the largest empty box amidst a point set]
{On the size of the largest empty box \\amidst a point set} 
\date{}
\author{Christoph Aistleitner} 
\address{ Institute of Financial Mathematics and applied Number Theory,
University Linz}
\email{aistleitner@math.tugraz.at}
\author{Aicke Hinrichs} 
\address{Institute of Analysis,
University Linz}
\email{aicke.hinrichs@jku.at}
\author{Daniel Rudolf}
\address{Institute for Mathematical Stochastics,
Universit\"at G\"ottingen}
\email{daniel.rudolf@uni-goettingen.de}
\thanks{The first author is supported by Schr\"odinger scholarship J-3311 and by projects I1751-N26 and F5507-N26 of the
Austrian Science Fund (FWF). The third author is supported by the 
DFG Priority Program 1324 and the DFG Research Training Group 2088.}
\begin{document}

\begin{abstract}
The problem of finding the largest empty axis-parallel box amidst a point 
configuration is a classical problem in 
computational geometry.
It is known that the volume of the largest empty box is of asymptotic order $1/n$ for $n \to \infty$ 
and fixed dimension $d$. However, it is natural to assume that the volume of the largest empty box 
increases as $d$ gets larger. In the present paper we prove that this actually is the case: for every 
set of $n$ points in $[0,1]^d$ there exists an empty box of volume at least $c_d n^{-1}$, where $c_d 
\to \infty$ as $d \to \infty$. More precisely, $c_d$ is at least of order roughly $\log d$. 
\end{abstract}

\maketitle

\section{Introduction}

The problem of finding the largest empty axis-parallel rectangle amidst a point
configuration in the unit square is a standard problem in computational geometry
and computational complexity theory. Here the emphasis is on the word
\emph{finding}; that is, researchers are actually interested in the complexity
of algorithms whose output is the largest empty rectangle. The problem has
probably been introduced by Naamad, Lee and Hsu  \cite{nlh}, and generalizes in
a natural way to the multi-dimensional case, where one has to find the largest
empty axis-parallel box amidst a point configuration in the $d$-dimensional unit
cube. Given the prominence of the problem, it is quite surprising that very
little is known on the size (or, more precisely, the area resp. $d$-dimensional
volume) of the largest empty box. There are actually two problems, a ``lower bound problem'' 
and an ``upper bound problem'': one asking for the
minimal size of the largest empty box for \emph{any} point configuration (a kind
of ``irregularities of distributions'' problem), and one asking for the maximal size of the 
largest empty box for an optimal point configuration.

Our attention was drawn to this problem by the fact that questions asking for the
size of the largest empty box have recently appeared in approximation theory in
problems concerning the approximation of high-dimensional rank one tensors; see
\cite{bddg} and \cite{nr}. However, we believe that the problem is very natural
and interesting in its own right. Our main interest was in the multi-dimensional
case, and the main question was: If the dimension $d$ of the problem is
increased, does the minimal size of the largest empty box (taking the minimum
over all configurations of $n$ points) necessarily increase as $d$ increases? As
it turns out, the answer is affirmative. However, we are still far from a
complete quantitative solution of the problem.

To state our results, let us fix the notation. Let $d \geq 2$ and let $[0,1]^d$
be the $d$-dimensional unit cube. For $x,y\in [0,1]^d$
with $x=(x_1,\dots,x_d)$ and $y=(y_1,\dots,y_d)$ we write $x\leq y$ if 
this inequality holds coordinate-wise. For $x\leq y$ we write $[x,y)$ for the
axis-parallel box $\Pi_{i=1}^d [x_i,y_i)$, and
define
\[
 \mathscr{B} = \left\{ [x,y) \colon x,y\in [0,1]^d,\; x\leq y   \right\}.
\]
For $n \geq 1$ let 
$T$ be a set of points in $[0,1]^d$ of cardinality $\abs{T}=n$. The volume of the
largest empty axis-parallel box, which we call the \emph{dispersion} of $T$, is
then given by
\[
 \disp(T) = \sup_{B\in \mathscr{B},\, ~B\cap T= \emptyset} \lambda(B),
\]
where $\lambda(B)$ denotes the volume of $B$. We are in
particular interested in the minimal dispersion of  point sets; thus we set
$$
\disp^*(n,d) = \inf_{T \subset [0,1]^d, ~|T|=n} \quad \disp(T).
$$
It turns out that this minimal dispersion is of asymptotic order $1/n$ as a
function of $n$. Consequently, to capture the dependence of this quantity on the
dimension $d$ we define the number
\begin{equation} \label{cd}
c_d = \liminf_{n \to \infty} n ~\disp^*(n,d).
\end{equation}
Moreover, 
we define the inverse function of the minimal dispersion
\begin{equation} \label{ned}
 N(\varepsilon,d) = \min \left\{ n\colon   \disp^*(n,d) \leq \varepsilon
\right\}, \qquad \textrm{ for $\varepsilon\in (0,1)$.}
\end{equation}
In particular, this number is of 
interest in the applications in approximation theory mentioned
above.

For $\disp^*(n,d)$ one trivially has the lower bound
$$
\disp^*(n,d) \geq \frac{1}{n+1},
$$
which follows directly from splitting the unit cube into $n+1$ parts and using
the pigeon hole principle. On the other hand, the best published upper bound is
\begin{equation}\label{prodo}
\disp^*(n,d) \leq \frac{1}{n} \left( 2^{d-1} \prod_{i=1}^{d-1} p_i \right),
\end{equation}
where $p_i$ denotes the $i$-th prime. This upper bound is essentially due to
Rote and Tichy \cite{rt}, with a detailed proof given in \cite{DuJi13}. Note
that by the prime number theorem the product on the right-hand side of
\eqref{prod} grows super-exponentially in $d$.
After reading a draft version of our manuscript, Gerhard Larcher communicated to us a proof of the upper bound 
\begin{equation} \label{prod}
\disp^*(n,d) \leq \frac{2^{7d+1}}{n}.
\end{equation}
With his permission, we include the argument here in the last section.
This bound is better than \eqref{prodo} for $d\ge 54$.

The only non-trivial lower bound known to date is 
$$
\disp^*(n,d) \geq \frac{5}{4(n+5)},
$$
due to Dumitrescu and Jiang \cite{DuJi13}. Thus for the constant $c_d$ from
\eqref{cd} we have
\begin{equation} \label{cdest}
c_d \in \left[ \frac{5}{4}, ~2^{7d+1} \right], \qquad d
\geq 2.
\end{equation}
It is a natural question to ask whether
$$
c_d \to \infty \qquad \textrm{as $d \to \infty$;}
$$
that is, whether there is some kind of ``irregularity of distributions''-type
behavior in high dimensions.\footnote{As we learned from one referee, 
this problem was also stated as \emph{Open question 6} in 
the \emph{Computational Geometry Column} of Dumitrescu and Jiang; see \cite{cgc}.} 
As we will prove below, the answer is
affirmative.

Before we state our main results, we note that a lower bound of the minimal dispersion
leads also to a lower bound of the minimal extremal discrepancy. Moreover, the point sets leading to the upper bounds \eqref{prodo} and \eqref{prod} are well-known low discrepancy sequences. 
This provides a link of the problem investigated in this paper 
to the theory of uniform distribution modulo one and
discrepancy theory, and to the theory of irregularities of distributions. For
the first two subjects see for example \cite{dt,kn}, for the latter see
\cite{bc}. 
 The theory of digital nets and low-discrepancy sequences, which is
used to prove \eqref{prod}, is explained in detail in \cite{dp}. Finally, the
theory of information-based complexity, where quantities such as the one in
\eqref{ned} are investigated, is described in \cite{nw}.

The topic addressed in this paper has been recently 
taken up by Ullrich \cite{mario}, who considered 
the dispersion on the $d$-dimensional unit \emph{torus} 
rather than on the unit cube, proved lower bounds for this 
notion of dispersion and obtained results which have an interesting 
resemblance of known \emph{inverse of the discrepancy}-type results.

During the refereeing process of the current paper Dumitrescu and Jiang in 
\cite{arxiv_Dumi_Jian} established a generalization of our proof technique. 
It does not
lead to a better bound, but it is
interesting to note that the generalization connects to some classical concepts
in extremal set theory. In particular, the concept of ``properly overlapping
partitions'' turned out to be a rediscovery of the previously known concept of ``qualitative
independent partitions''.

\section{Main result}

In the statement of the following theorem, and in the sequel, $\log_2$ denotes the dyadic logarithm.

\begin{thm}  \label{thm: main}
 For all positive integers $d$ and $n$ we have
 \begin{equation}  \label{eq: low_large_m}
  \disp^*(n,d) \geq \frac{\log_2 d}{4 (n + \log_2 d)}.
 \end{equation}
\end{thm}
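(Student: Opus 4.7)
Plan. Let $m = \lfloor \log_2 d \rfloor$ so that $d \geq 2^m$. My target is an empty axis-parallel box of volume at least $\epsilon := m/(4(n+m))$; since the trivial bound is $\disp^*(n,d) \geq 1/(n+1)$, I must extract an extra factor of roughly $m$ from the availability of $d$ coordinates.

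The first natural attempt is an iterative median-halving construction: build a nested sequence $B_0 \supset B_1 \supset \cdots \supset B_m$ with $B_0 = [0,1]^d$, where at step $k$ I pick a previously unused coordinate $i_k$ and split $B_{k-1}$ at the median of its points in that coordinate, keeping the half of larger volume. Each step loses at most a factor of two in volume while halving the point count, so after $m$ steps I obtain a box $B_m$ with $\vol(B_m) \geq 2^{-m}$ and $|B_m \cap T| \leq n/2^m \leq n/d$. One further pigeonhole in coordinate $i_{m+1}$ then produces an empty slab of length at least $1/(n/d+1) = d/(n+d)$, giving an empty box of volume at least $1/(n+d)$. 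This falls short of the target by essentially the factor $\log_2 d$ that we wish to extract.

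The main obstacle is thus to replace the naive median-halving by a cleverer cut that, at each step, reduces the point count by a factor of two but loses only a small \emph{additive} amount of volume rather than a multiplicative factor of two. To do this I would use the $d-k$ still-unused coordinates at step $k$ as a reservoir: a pigeonhole over these coordinates should exhibit one in which the median lies far enough from the centre of the current interval that the larger-volume half retains nearly all of the volume; summing the per-step additive volume losses over the $m$ iterations should then give a total loss of order $m/n$, which together with the final pigeonhole step yields the asserted $m/(4(n+m))$ bound.

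An alternative route, hinted at by the paper's own later remarks on ``properly overlapping / qualitatively independent partitions,'' is a global combinatorial pigeonhole: for each coordinate $i$ a threshold $a_i$ defines a two-block partition of $T$, and an $m$-element subset $S \subseteq [d]$ together with a sign pattern $\sigma \in \{L,R\}^S$ defines a box of volume $\prod_{i \in S} \lambda(I_i^{\sigma_i})$; calibrating the thresholds $a_i$ and then averaging over the $2^m \binom{d}{m}$ such boxes (each point of $T$ lying in exactly $\binom{d}{m}$ of them) should produce an empty box of the required volume. The hardest step in either approach is obtaining the precise constant $1/4$ and the denominator $n + \log_2 d$ rather than the simpler $n$ or $n+d$ bounds that the elementary constructions yield.
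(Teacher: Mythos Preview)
Your proposal is a plan, not a proof, and as you yourself note the only fully worked-out step (iterative median-halving) falls short by exactly the factor $\log_2 d$ you need. The two suggested repairs---``cleverer cuts with only additive volume loss'' and ``averaging over $2^m\binom{d}{m}$ sign-pattern boxes''---are both left at the level of ``should work,'' with no mechanism given for why a coordinate with a favourable median must exist at each step, or why the averaging produces an \emph{empty} box of the right volume rather than merely a lightly-populated one. So there is a genuine gap: no complete argument is present.

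The paper's route is both different and simpler than any of your three attempts, and it decouples the two issues you are trying to handle simultaneously. First (Lemma~\ref{lem: basic}), it reduces from $n$ points to $\ell$ points by a single coarse partition: split $[0,1]^d$ into $k+1\approx n/(\ell+1)$ boxes of equal volume; one of them contains at most $\ell$ points, and rescaling gives
\[
\disp^*(n,d)\ \ge\ \frac{(\ell+1)\,\disp^*(\ell,d)}{n+\ell+1}.
\]
This loses only a single factor $1/(k+1)$ in volume, not the $2^{-m}$ your iterative halving loses. Second (Lemma~\ref{lem: crucial}), for $\ell$ points with $d\ge 2^\ell-1$ it runs the pigeonhole not on points but on \emph{coordinates}: to each coordinate $i$ associate the vector $(\chi(t_1(i)),\dots,\chi(t_\ell(i)))\in\{0,1\}^\ell$ recording which half of $[0,1]$ each point's $i$th coordinate lies in. There are at most $2^\ell$ such vectors, so either some coordinate has the all-$0$ or all-$1$ pattern (and half the cube is empty), or two coordinates $i,j$ share a pattern, whence every point lies in $[0,\tfrac12]^2\cup(\tfrac12,1]^2$ in those two coordinates and the off-diagonal quarter $[0,\tfrac12)\times[\tfrac12,1]\times[0,1]^{d-2}$ is empty. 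This is where the constant $1/4$ comes from. Setting $\ell=\lfloor\log_2 d\rfloor$ and combining the two lemmas gives the theorem directly.

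The idea you were missing is precisely this coordinate-level pigeonhole on $\{0,1\}^\ell$ patterns: it extracts a box of \emph{absolute} volume $1/4$ from $\ell$ points, rather than trying to whittle down $n$ points step by step.
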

We can state this result also in terms of the inverse of the minimal dispersion.
\begin{cor}
 For $\varepsilon \in (0,1/4)$ and $d \geq 1$ we have
 \[
  N(\varepsilon,d) \geq (4\ve)^{-1} (1-4\ve)\log_2 d.
 \]
\end{cor}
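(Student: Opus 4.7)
The plan is to obtain the corollary as a direct, essentially mechanical reformulation of Theorem \ref{thm: main}. By the definition of $N(\varepsilon,d)$ in \eqref{ned}, setting $n = N(\varepsilon,d)$ yields $\disp^*(n,d) \leq \varepsilon$. Combining this with the lower bound from Theorem \ref{thm: main} produces the chain
\[
\frac{\log_2 d}{4(n + \log_2 d)} \;\leq\; \disp^*(n,d) \;\leq\; \varepsilon,
\]
so that the whole argument reduces to solving the resulting scalar inequality $\log_2 d \leq 4\varepsilon(n + \log_2 d)$ for $n$.

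Rearranging, one gets $(1-4\varepsilon)\log_2 d \leq 4\varepsilon n$, and dividing by $4\varepsilon$ (which is positive) yields $n \geq (4\varepsilon)^{-1}(1-4\varepsilon)\log_2 d$. Since this holds for $n = N(\varepsilon,d)$, the claim follows. The hypothesis $\varepsilon \in (0,1/4)$ plays a mild but essential role: it ensures $1 - 4\varepsilon > 0$, so that the lower bound on $N(\varepsilon,d)$ is nontrivial (nonnegative) and the division by $4\varepsilon$ preserves the direction of the inequality.

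There is essentially no obstacle in this argument; the entire mathematical content is encapsulated in Theorem \ref{thm: main}. The only minor subtlety worth flagging in the write-up is that $N(\varepsilon,d)$ is defined via $\min$ rather than $\inf$, and one should note that this minimum is attained (or, equivalently, that the infimum in \eqref{ned} is a minimum for each fixed $d$ whenever it is finite), so that substituting $n = N(\varepsilon,d)$ into the lower bound from Theorem \ref{thm: main} is legitimate.
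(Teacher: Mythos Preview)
Your proposal is correct and is exactly the intended derivation: the paper presents this corollary as a direct restatement of Theorem~\ref{thm: main} in terms of the inverse function $N(\varepsilon,d)$, and your algebraic manipulation from $\frac{\log_2 d}{4(n+\log_2 d)} \le \varepsilon$ to $n \ge (4\varepsilon)^{-1}(1-4\varepsilon)\log_2 d$ is precisely that reformulation.
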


For the constant $c_d$ we can directly deduce the following corollary from Theorem
\ref{thm: main}.

\begin{cor} \label{co2}
We have
$$
c_d \geq \frac{\log_2 d}{4} \qquad \textrm{for $d \geq 1$.}
$$
In particular $\lim_{d \to \infty} c_d = \infty$.
\end{cor}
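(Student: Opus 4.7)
The plan is to derive Corollary~\ref{co2} as an immediate consequence of Theorem~\ref{thm: main}. The first step is to rewrite the bound $\disp^*(n,d) \geq \log_2 d / (4(n + \log_2 d))$ from the theorem in a form suited to taking a limit in $n$: multiplying through by $n$ yields
\[
n\, \disp^*(n,d) \;\geq\; \frac{n \log_2 d}{4(n + \log_2 d)} \;=\; \frac{\log_2 d}{4\bigl(1 + (\log_2 d)/n\bigr)}.
\]

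Next I would take the limit inferior in $n$ on both sides; by the definition~\eqref{cd} this is exactly $c_d$ on the left. Since $d$ is fixed, the right-hand side has a genuine limit as $n \to \infty$, namely $\log_2 d / 4$, because $(\log_2 d)/n \to 0$. Hence $c_d \geq \log_2 d / 4$, which is the first assertion. Note that for $d = 1$ the bound is vacuous since $\log_2 1 = 0$, and it is strictly positive for $d \geq 2$.

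The second assertion, $\lim_{d \to \infty} c_d = \infty$, then follows at once because $\log_2 d / 4 \to \infty$ as $d \to \infty$. I do not anticipate any real obstacle here: all of the substantive content of Corollary~\ref{co2} is already packaged in Theorem~\ref{thm: main}, and the corollary simply extracts the $n$-asymptotic consequence of that bound together with the trivial observation that $\log_2 d$ diverges with $d$.
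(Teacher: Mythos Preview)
Your proposal is correct and matches the paper's approach: the paper simply states that the corollary is a direct consequence of Theorem~\ref{thm: main} without writing out any further details, and your derivation---multiplying the bound by $n$ and passing to the $\liminf$---is exactly the intended elementary argument.
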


Note that by Corollary \ref{co2} the constant $c_d$ grows at least
logarithmically as $d$ increases; on the other hand, the upper bound for $c_d$
in \eqref{cdest} grows super-exponentially. 
So there remains a large gap between the lower and upper bounds for $c_d$.

\section{Proof and auxiliary lemmas}

The result is trivial for $d=1$. Thus we will always assume in the sequel that $d \geq 2$.

\begin{lem}  \label{lem: basic}
Let positive integers $\ell, n$ be given. Then
    \[
     \disp^*(n,d) \geq \frac{(\ell+1)\disp^*(\ell,d)}{n+\ell+1}.
    \]
\end{lem}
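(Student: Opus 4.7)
The plan is to slice the cube along one coordinate into slabs, each containing only a few points of $T$, and then invoke the definition of $\disp^*(\ell,d)$ inside the widest slab after an affine rescaling. The lemma then follows from comparing the width of the widest slab against the total width $1$.

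More concretely, fix any $T\subset [0,1]^d$ with $|T|=n$ and let $y_1\le y_2\le \cdots \le y_n$ denote its first coordinates in sorted order; set $y_0=0$ and $y_{n+1}=1$. I would partition $[0,1]^d$ by the hyperplanes $\{x_1=y_{k(\ell+1)}\}$ for $k=1,2,\ldots,\lfloor n/(\ell+1)\rfloor$, which produces $K:=\lfloor n/(\ell+1)\rfloor+1$ open slabs, each of the form $(a,b)\times[0,1]^{d-1}$ and each containing at most $\ell$ points of $T$ in its interior by construction. Since the widths of the slabs add up to $1$, at least one slab has width
\[
 w \;\geq\; \frac{1}{K} \;=\; \frac{1}{\lfloor n/(\ell+1)\rfloor+1} \;\geq\; \frac{\ell+1}{n+\ell+1},
\]
where the last step uses the elementary bound $\lfloor n/(\ell+1)\rfloor+1\leq (n+\ell+1)/(\ell+1)$.

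Inside this widest slab $S$ there are at most $\ell$ points of $T$, and the affine map stretching $S$ onto $[0,1]^d$ sends them to a configuration of at most $\ell$ points in the unit cube. The definition of $\disp^*(\ell,d)$, combined with the obvious monotonicity $\disp^*(k,d)\geq \disp^*(\ell,d)$ for $k\leq \ell$ (one may pad with arbitrary extra points without enlarging any empty box), then yields an axis-parallel empty box in the rescaled cube of volume at least $\disp^*(\ell,d)$. Pulling it back through the rescaling only dilates the first coordinate by the factor $w$, and hence produces an axis-parallel empty box in $S\subset [0,1]^d$ of volume at least
\[
 w\cdot \disp^*(\ell,d) \;\geq\; \frac{(\ell+1)\disp^*(\ell,d)}{n+\ell+1}.
\]
Taking the infimum over $T$ gives the lemma.

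I do not anticipate a genuine obstacle here; the only bookkeeping needing attention is the slab count and the treatment of coinciding first coordinates. In the latter case some slabs degenerate to width zero, which only helps the maximum-width selection, so the argument is unaffected; alternatively one can first perturb $T$ into general position and pass to a limit.
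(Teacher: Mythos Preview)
Your argument is correct and follows the same decompose-and-rescale strategy as the paper: split $[0,1]^d$ into $K=\lfloor n/(\ell+1)\rfloor+1$ slabs, locate one with at most $\ell$ points and volume at least $1/K$, then rescale and invoke $\disp^*(\ell,d)$. The only difference lies in how the slabs are chosen. The paper cuts into $K$ pieces of \emph{equal volume} and uses the pigeonhole principle on the point counts to find one containing at most $\ell$ points; you instead cut at every $(\ell+1)$-th sorted first coordinate so that every slab contains at most $\ell$ points \emph{by construction}, and then average over the widths to find one of width at least $1/K$. Both routes yield the identical inequality, and neither is materially simpler. The paper's version avoids the sorting step and one small bookkeeping issue you do not quite address: when the widest slab is not the leftmost one, its left face $\{a\}\times[0,1]^{d-1}$ carries a point of $T$, so the pulled-back empty box must be taken with first coordinate strictly greater than $0$ in the rescaled cube; an $\varepsilon$-shrinking of the box handles this, at no cost since $\disp$ is a supremum.
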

 The proof of the lemma uses an idea similar to the one in the proof of
\cite[Theorem~1]{DuJi13}.
\begin{proof}
 First we note that there is an integer $k \geq 0$
such that
 $n=(\ell+1)k+r$ for $r\in \{0,1,\dots,\ell\}$. We will use the following two
facts:
 \begin{itemize}
  \item For any decomposition of $[0,1]^d$ into $k+1$ boxes of equal volume, there is one box which contains at most $\ell$ points.
  \item For an arbitrary box $A \in \mathscr{B}$, writing 
  \[
   \disp^*(A,\ell,d) := \inf_{T \subset A, ~|T|=\ell} \quad \sup_{B\in
\mathscr{B},~B \subset A, ~B\cap T= \emptyset} \quad \lambda(B)
  \]
  we always have
  \[
   \disp^*(A,\ell,d) \geq \lambda(A) \disp^*(\ell,d).
  \]
  \end{itemize}
By the application of these facts we obtain
\begin{align*}
 \disp^*(n,d) \geq \frac{\disp^*(\ell,d)}{k+1},
\end{align*}
which by $n\geq (\ell+1)k$ proves the assertion.
\end{proof}
The next result tells us that if $d$ is sufficiently large, then for a small
number of
points in $[0,1]^d$ we can find a strong structure in the coordinates of these
points. 

\begin{lem}  \label{lem: crucial}
Let $\ell$ be an positive integer and assume that $d\geq 2^\ell-1$. 
Then 
 \[
  \disp^*(\ell,d) \geq 1/4.
 \]
\end{lem}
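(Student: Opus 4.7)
The plan is to apply a pigeonhole argument to the ``sign patterns'' of the $\ell$ points in the $d$ coordinates. Given points $t_1,\dots,t_\ell\in[0,1]^d$, I would associate to each coordinate $j\in\{1,\dots,d\}$ a vector $s_j\in\{L,R\}^\ell$ defined by setting $s_j(k)=L$ if the $j$-th coordinate of $t_k$ lies in $[0,1/2)$ and $s_j(k)=R$ otherwise. Of the $2^\ell$ possible such vectors exactly two are constant (the all-$L$ and the all-$R$ vectors), leaving $2^\ell-2$ non-constant ones.

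I would then split into two cases. First, if some $s_j$ is constant, then all $\ell$ points have their $j$-th coordinate on the same side of $1/2$, so the axis-parallel box equal to the opposite half in coordinate $j$ and to $[0,1)$ in every other coordinate is empty and has volume $1/2\geq 1/4$. Second, if every $s_j$ is non-constant, then since there are only $2^\ell-2$ non-constant vectors in $\{L,R\}^\ell$ while $d\geq 2^\ell-1$, the pigeonhole principle supplies two distinct coordinates $j_1\neq j_2$ with $s_{j_1}=s_{j_2}$. Concretely, for every $k$ the $j_1$- and $j_2$-coordinates of $t_k$ lie on the same side of $1/2$, so no point satisfies both $t_k^{(j_1)}\in[0,1/2)$ and $t_k^{(j_2)}\in[1/2,1)$ simultaneously. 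Hence the axis-parallel box with $j_1$-side $[0,1/2)$, $j_2$-side $[1/2,1)$, and sides $[0,1)$ in every remaining coordinate is empty and has volume exactly $1/4$.

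I do not foresee any serious obstacle: the whole argument reduces to the correct pigeonhole count, and the threshold $d\geq 2^\ell-1$ is precisely $(2^\ell-2)+1$, which is exactly where a repeated non-constant sign pattern is forced. The only slightly delicate point is checking that the constructed box is genuinely empty when a coordinate of some $t_k$ happens to equal $1/2$ or $1$, but this is immediate from the half-open convention for boxes used in the paper.
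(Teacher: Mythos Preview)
Your proposal is correct and follows essentially the same approach as the paper: associate to each coordinate a vector in $\{0,1\}^\ell$ recording on which side of $1/2$ each of the $\ell$ points lies, split off the constant-vector case, and then apply the pigeonhole principle to the $2^\ell-2$ non-constant vectors to find two matching coordinates. The only difference is that the paper puts the midpoint $1/2$ on the ``left'' side (using $[0,1/2]$ versus $(1/2,1]$), which forces it to take the empty box $[0,\tfrac12-\varepsilon)\times[\tfrac12+\varepsilon,1]\times[0,1)^{d-2}$ and let $\varepsilon\to 0$; your convention $[0,1/2)$ versus $[1/2,1]$ lets you write down a box of volume exactly $1/4$ directly, which is marginally cleaner.
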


\begin{proof}

Let $T=\{t_1,\dots,t_\ell\} \subset [0,1]^d$ 
be an arbitrary point set. To denote the coordinates of a point $t\in [0,1]^d$ we use the notation
$t=(t(1),\dots,t(d))$.
Set 
$$
\chi(x) = \left\{ \begin{array}{ll} 0 & \textrm{if $0 \leq x \leq 1/2$} \\ 1 & \textrm{if $1/2 < x \leq 1$.} \end{array} \right.
$$
Furthermore, for $m \in \{1, \dots, \ell\}$, define
$$
\tau_m = \big( \chi(t_m(1)), \chi(t_m(2)), \dots, \chi(t_m(d)) \big).
$$
If there exists an index $i\in\{1,\dots,d\}$ such that 
\begin{equation} \label{tau0}
(\tau_1(i), \dots, \tau_\ell(i)) = (0, \dots, 0) \qquad \textrm{or} \qquad (\tau_1(i), \dots, \tau_\ell(i)) =  (1, \dots, 1), 
\end{equation}
then all elements of $T$ are contained in one half of $[0,1]^d$, and we have $\disp(T) \geq 1/2$. 
On the other hand, if \eqref{tau0} fails then by $d \geq 2^\ell - 1$ 
and by the pigeon hole principle there must exist two distinct indices $i,j \in \{1, \dots, d\}$ such that 
$$
\big(\tau_1 (i), \dots, \tau_\ell(i)\big) = \big(\tau_1 (j), \dots, \tau_\ell(j)\big).
$$
We assume without loss of generality that $i=1$ and $j=2$. Then we have
$$
T \in \left(\left[0,\frac{1}{2}\right] \times \left[0, \frac{1}{2}\right] \times [0,1)^{d-2}\right) \cup  \left(\left(\frac{1}{2},1 \right] \times \left(\frac{1}{2},1\right] \times [0,1)^{d-2}\right).
$$
Let $0<\ve<\frac12$, and let $A$ denote the box 
$$
A = \left[0,\frac{1}{2}-\ve\right) \times \left[\frac{1}{2}+\ve,1\right] \times [0,1)^{d-2}.
$$
Then
$$
T \cap A = \emptyset,
$$
and 
$$
\lambda(A) = \left(\frac{1}{2} - \ve \right)^2.
$$
Letting $\ve \to 0$ we obtain the conclusion of the lemma.
\end{proof}

 Now we are able to prove Theorem~\ref{thm: main}.
 \begin{proof}[Proof of Theorem~\ref{thm: main}:]
We set 
  \[
   \ell = \left \lfloor \log_2 d \right \rfloor.
  \] 
Then we obviously have $2^\ell \leq d$, which allows us to apply Lemma  \ref{lem: crucial}. Thus by Lemma~\ref{lem: basic} and Lemma \ref{lem: crucial} we have
\begin{eqnarray}
\disp^*(n,d) & \geq & \frac{(\ell+1)\disp^*(\ell,d)}{\ell+1+n} \nonumber\\
   & \geq & \frac{\ell+1}{4 (\ell+1+n)}\\ \nonumber
   & \geq &  \frac{\log_2 d}{4 (n+ \log_2 d)}. \nonumber
\end{eqnarray}
This proves the theorem.
 \end{proof}

\section{Upper bound}

We now present the argument for the upper bound \eqref{prod} discussed in the introduction. 
This argument was shown to us by Gerhard Larcher, 
who generously allowed us to include it here.

A dyadic interval is an interval of the form $[a 2^{-k},(a+1)2^{-k})$ 
for non-negative integers $a,k$ with $0\le a < 2^k$. 
A ($d$-dimensional) dyadic box is the Cartesian product of $d$ dyadic intervals.
Let $t$ be a non-negative integer and let $m$ be a positive integer 
with $t\le m$. A $(t,m,d)$-net (in base 2) is a set $T$ of $2^m$ points 
in $[0,1]^d$ such that each dyadic box of volume $2^{t-m}$ contains exactly 
$2^t$
points of $T$. 
Since any interval $[x,y)\subset [0,1)$ contains a dyadic 
interval of length at least $4^{-1}(y-x)$, any box $B\in \mathscr{B}$ 
contains a dyadic box of volume at least $4^{-d} \lambda(B)$. It follows that, 
if $T$ is a $(t,m,d)$-net and $4^{-d} \lambda(B) \ge 2^{t-m}$, then $B$ 
contains a point of $T$. Hence
$$ \disp^*(2^m,d) \le \disp(T) \le 2^{t-m+2d}.$$
In \cite{nx}, $(t,m,d)$-nets with $t\le 5d$ are constructed for all 
dimensions $d$ and $m\ge t$. If now $m$ is chosen such that $2^m \le n < 2^{m+1}$, 
we arrive at
$$ \disp^*(n,d) \le \disp^*(2^m,d) 
\le 2^{7d-m}\le 
2^{7d+1}/n.
$$
This shows \eqref{prod} in the case $n\ge 2^{5d}$. For $n<2^{5d}$ 
the same bound holds trivially.


\end{document}